\begin{document}
\title*{Recent Developments in Heterotic Moduli}
\author{Javier Murgas Ibarra and Eirik Eik Svanes}
\institute{Javier Murgas Ibarra \at University of Stavanger, Kristine Bonnevies vei 22, 4021 Stavanger,\\ \email{javier.j.murgasibarra@uis.no}
\and Eirik Eik Svanes \at University of Stavanger, Kristine Bonnevies vei 22, 4021 Stavanger,\\ \email{eirik.e.svanes@uis.no}}
%
%

\maketitle

\abstract{We review recent results for heterotic moduli and the Hull--Strominger system. In particular, we discuss mathematical properties of the recently derived deformation operator $\bar D$ associated to the deformation complex of heterotic $SU(3)$ solutions. We review results on Serre duality, showing that the operator has a vanishing index, and discuss a notion of \v{C}ech cohomology and a particular instance of a Dolbeault theorem for $\bar D$. Specifically, the cohomology parametrising infinitesimal deformations is isomorphic to the first \v{C}ech cohomology of an associated cochain complex. This will be useful for future research, as it provides a more algebraic handle on the heterotic moduli problem, which is useful for understanding notions of stability, geometric invariants, and enumerative geometry for the Hull--Strominger system.}

\section{Introduction and history}
\label{sec:introduction}
The low-energy effective theory of string theory is ten-dimensional supergravity, which may be used to construct realistic models of grand unified theories in a process known as string compactification. In this scenario, one assumes that six of the ten dimensions are curled up and compact, leaving a four-dimensional spacetime where the effective field theory lives. That is, the spacetime has an ansatz of the form
\begin{equation}
 M_{10} = M_{4} \times X_6^{compact}~.
\end{equation} 
In this setting, the heterotic string is of particular interest, as it is naturally equipped with a gauge bundle wherein the Standard Model fits. The first successful string compactification of this kind was done in 1985 by \cite{Candelas:1985en}, and is known as the Standard Embedding. In this setup, the vector bundle is identified with the tangent bundle. This leads to an exact cancellation of anomalies, which results in the internal geometry of $X_6$ being Kähler and Calabi--Yau. Using Calabi--Yau manifolds as the compact space of string compactifications has turned out to be very effective. Indeed our understanding of Calabi--Yau geometry, their moduli spaces, and constructions through algebraic geometry, has developed significantly over the past few decades, which is also closely tied to their usefulness in string theory.

More general heterotic string compactifications on Calabi--Yau manifolds have also been considered. These are of interest due to their ability to embed chiral gauge theories close to the minimally supersymmetric standard model, see for example \cite{Witten:1986kg, Bouchard:2005ag, Curio:2004pf, Braun:2005ux, Buchmuller:2006ik, Anderson:2011ns, Braun:2013wr, Buchbinder:2014qda, Braun:2017feb, Ashmore:2020ocb}, and their tractability in computing quantum corrections which may be  phrased in terms of geometric conditions on $X_6$. Mathematically, these solutions are known to solve the full set of coupled geometric equations \cite{Andreas:2010qh}, known as the Hull--Strominger system \cite{Hull:1986kz,Strominger:1986uh}. Explicit non-Kähler solutions have also appeared in the literature, such as for example principal $T^2$ fibrations over $K3$ manifolds, see \cite{Dasgupta:1999ss, Becker:2003yv, Becker:2003sh, Goldstein_2004, Fu:2006vj, Becker:2009df, Melnikov:2014ywa, GarciaFernandez2020} and references therein, or local solutions, beginning with the seminal work of \cite{Callan:1991at}, see also \cite{Fu:2009zee, Carlevaro:2009jx, Chen:2013nma, Halmagyi:2016pqu, Halmagyi:2017lqm, Acharya:2020xgn} and references therein. Other mathematically interesting solutions have also appeared, see e.g. \cite{LopesCardoso:2002vpf, Fernandez:2008wa, Fei:2014aca, Fei:2015kua, otal2017invariant, pujia2021hull}. These geometries are far from Kähler Calabi--Yau, but provide a mathematically interesting framework for studying the interplay of gauge theory and complex geometry.

The Hull--Strominger system leads to four-dimensional Minkowski spacetime with $N=1$ supersymmetry. Specifically, the compact geometry $X$ should be a compact complex three-fold with vanishing first Chern class.  There is a holomorphic vector bundle $V$, which admits a connection that is hermitian Yang--Mills. The Green-Schwarz anomaly then relates the hermitian form on $X$ to topological constraints on $V$ and $X$. This somewhat complicated coupling between the gauge and gravitational sector of the Hull--Strominger system makes geometrical aspects such as the moduli problem, or deformation problem, less tractable. Still, there has been recent progress in understanding heterotic moduli, which will partly be reviewed here.

The deformations are coupled via the supersymmetry variations and the heterotic Bianchi identity 
\begin{equation}
    \label{eq:BI1}
    i\partial\bar\partial\omega=\tfrac{\alpha'}{2}\left({\rm tr}(F^2)- {\rm tr}(R^2)\right)\:,
\end{equation}
where $\omega$ is the hermitian two-form, $F$ is the curvature of a connection on a holomorphic gauge bundle with structure group contained in $E_8\times E_8$, and $R$ is the curvature of a connection on the tangent bundle $TX$. We will expand on the details of these structures below. When written in terms of spinor bilinears, and related to the differential geometric structures of the internal manifold $X$, some of these can also be written as the variation of a functional called the superpotential \cite{LopesCardoso:2003dvb,Gurrieri:2004dt,delaOssa:2015maa,McOrist:2016cfl}. We refer to them as F-terms in analogy with $N=1$ $d=4$ supersymmetry. Finite deformations of the F-terms where studied in \cite{Ashmore:2018ybe}, where the corresponding heterotic Maurer-Cartan equation was described.\footnote{This was done ignoring a subtlety regarding a tangent bundle connection which appears in the heterotic Bianchi identity.} The remaining supersymmetry variations are referred to as D-terms, continuing the analogy. Evidence for this was presented in \cite{McOrist:2021dnd}, and further studied in \cite{MPS24}, whereby using the string theory moduli space metric, calculated in \cite{Candelas:2016usb,McOrist:2019mxh} by a dimensional reduction, it was demonstrated that infinitesimal deformations of the F-terms correspond to the kernel of a certain $\bar D$-operator, and the D-terms lie in the kernel of its adjoint ${\bar D}^\dagger$ with respect to the moduli metric. In \cite{Ashmore:2019rkx,Garcia-Fernandez:2020awc} the D-terms were presented in terms of a moment map construction, consistent with the usual $N=1$ $d=4$ supersymmetry lore. 

However, unlike the classical case of the moduli space of Calabi--Yau manifolds (see for example \cite{Candelas:1990pi}), it is certainly not obvious that the deformations decouple into a direct sum of cohomologies, related to complex structure deformations, deformations of the hermitian structure, and deformations of the holomorphic vector bundle. Indeed, as the bundle is holomorphic it requires a field strength $F^{(0,2)} = 0$ with respect to the complex structure of $X$. For a generic deformation of the complex structure, one might generate some $F^{(0,2)}$ that violates this condition. Indeed, as demonstrated by Atiyah \cite{Atiyah:1955}, this is the case if such a deformation is not $\bar{\partial}_{\mathcal A}$--exact. In this case the corresponding complex structure deformation is not a deformation or parameter of the heterotic theory. See \cite{Anderson:2010mh, Anderson:2011ty, Anderson:2011cza, Cicoli:2013rwa} for applications of this mechanism to the heterotic moduli problem. 

The additional coupling of deformations by the Bianchi identity makes the heterotic moduli problem even more involved. There has however been recent progress in decoding the subtleties of this problem. In \cite{Anderson:2014xha, delaOssa:2014cia} the Bianchi identity was encoded as an additional Atiyah type extension, showing that the infinitesimal moduli could be viewed as the first cohomology of a double extension (to a physics level of rigour). Furthermore, in \cite{Garcia-Fernandez:2015hja} the problem was reprhased mathematically as the moduli spaces of solutions of suitable Killing spinor equations on a (holomorphic) Courant algebroid, where it was also shown to be elliptic. A lot of recent progress has since been made in our understanding of the geometric structures of these equations and their moduli, both from a physical and mathematical point of view. A non-exhaustive list of references includes \cite{delaOssa:2014msa, delaOssa:2015maa, Candelas:2016usb, McOrist:2016cfl, Phong:2016ggz, Candelas:2018lib, Garcia-Fernandez:2018emx, Garcia-Fernandez:2018ypt, Garcia-Fernandez:2018qcl, McOrist:2019mxh, Fei:2019dap, Ashmore:2019rkx, Garcia-Fernandez:2020awc, Alvarez-Consul:2020hbl, Smith:2022baw, Collins:2022snx, Garcia-Fernandez:2023vah, Garcia-Fernandez:2023nil, Alvarez-Consul:2023zon, Ashmore:2023ift, Ashmore:2023vji, Tellez-Dominguez:2023wwr, Silva:2024fvl, Kupka:2024rvl, Garcia-Fernandez:2024ypl, Kupka:2024tic}.

Many of the recent developments in heterotic geometry however employ a mathematical trick which is not warranted from a physical point of view. As demonstrated in \cite{Ivanov:2009rh, Martelli:2010jx}, in order for the Hull--Strominger system of equations to solve the equations of motion, it is important that the tangent bundle connection, whose curvature appears in the Bianchi identity \eqref{eq:BI1}, is picked to be an instanton. This is satisfied modulo higher $\alpha'$ corrections for the physical system of equations, where the Hull connection is used \cite{Hull:1986kz}, but unfortunately the resulting physical system of equations is not closed in an appropriate mathematical sense. Thus, in order to get a closed system of equations, it is common to promote this connection to an additional degree of freedom of an instanton on $TX$. This results in a closed system of equations, which is easier to handle mathematically. This, however presents a problem when it comes to studying the moduli problem of the system, as the extra degrees of freedom are non-physical. There are numerous examples historically of physics guiding mathematical insights, and we therefore ignore this subtlety at our peril. The recent developments presented in this review are in the direction of remedying this issue.  

An Alternative approach, as done in Strominger's original paper \cite{Strominger:1986uh}, the connection may be chosen to be the Chern-connection of the hermitian form $\omega$ (which is generically not an instanton, leaving only an approximate solution to the equations of motion). The resulting system of equations may then more accurately be referred to as the Strominger system. See for example \cite{Becker:2006et, Fu:2006vj, Becker:2006xp, Becker:2008rc, Fei:2014aca, Phong:2016gbw, Phong:2016ggz, Fei:2017ctw, Fei:2019dap, Collins:2022snx, Picard:2024tqg, Picard:2024pad} and references therein for this approach. The subtleties regarding the choice of connection are discussed in great detail elsewhere, see for example \cite{Becker:2009df, Melnikov:2014ywa, delaOssa:2014msa, Garcia-Fernandez:2016azr} and references therein. The results presented in the present review are largely applicable to either choosing the Hull connection (modulo higher order $\alpha'$ corrections), the Chern connection, or a hermitian Yang--Mills connection. We will discuss specific instances in how they should be amended for eah choice. 

The paper is organised as follows. In section \ref{sec:HS} we give a brief review of the Hull--Strominger system. Following this, we define the differential and cohomology which counts heterotic moduli in section \ref{sec:Moduli}. We then review some very recent results regarding the mathematical properties of this structure, in particular a Serre duality result in section \ref{sec:Serre}, and a definition of \v{C}ech cohomology and a Dolbeault theorem in section \ref{sec:Cech}. We conclude with some comments on directions of future research in section \ref{sec:Conclusion}.

\section{The Hull--Strominger system}
\label{sec:HS}
We will now summarise Hull--Strominger system of equations for supersymmetry.  The spacetime geometry is taken to be of the form 
\begin{equation}
    M_{10}=\mathbb{R}^{3,1} \times X~.
\end{equation}
The manifold $X$ is six-dimensional and complex, which means that it admits an integrable complex structure 
\begin{equation}
J = J_m{}^n  dx^m \otimes \partial_n\:,    
\end{equation}
where $x^m$ are real coordinates on $X$.  The complex structure is integrable meaning $J^2 = -1$, and its Nijenhuis tensor vanishes. We often have need for wedges -- in the interest of clutter reduction, we  omit the wedge symbol, writing for example $d x^{mn} \cong dx^m \wedge dx^n$. Occasionally, there may be ambiguity in which we case  we write the wedge explicitly.  

The complex manifold has trivial canonical bundle and so admits a holomorphic volume form $\Omega$
\begin{equation}
    \label{eq:HolOm}
    d\Omega =0\:,
\end{equation}
where $\Omega$ is a $(3,0)$-form with respect to the complex structure. We also have a Riemannian metric $d s^2 = g_{mn} d x^m \otimes d x^n$, which is compatible with the complex structure. It follows that there exists a hermitian two-form 
\begin{equation}
\omega = \tfrac12\omega_{mn} d x^{mn} = i g_{\mu\bar\nu} dz^{\mu\bar\nu}\:,    
\end{equation}
where we have evaluated the hermitian form in the complex coordinates that diagonalise a given $J$. 

There is a vector bundle $E \to X$, with a connection $A$, and a structure group being a subgroup of $E_8\times E_8$ or ${\rm Spin}(32)/\mathbb{Z}_2$. The gauge connection is anti-hermitian meaning
$$
A = \mathcal{A} - \mathcal{A}^\dagger~,
$$
where ${\cal A}$ is the $(0,1)$ component of $A$. The field strength for the connection is  $F= d A + A^2$ and supersymmetry requires it be holomorphic 
\begin{equation}
    \label{eq:Ahol}
    F^{(0,2)} = 0\:,
\end{equation}
and  solve a Hermitian Yang--Mills equation
\begin{equation}
    \label{eq:HYM}
    g^{mn} F_{mn} =0~,
\end{equation}
which on $X$ can be alternatively written as $\omega^2 F = 0$.     

There is a three-form $H$ which on the one hand is defined as the contorsion of $\omega$
\begin{equation}
    \label{eq:ddco}
    H = d^c \omega = \frac{1}{3!} J^m J^n J^p (d \omega)_{mnp}~.
\end{equation}
For a fixed complex structure this is 
\begin{equation}
\label{eq:AnomalyTorsion}
H=i (\partial-\bar\partial)\omega\:.    
\end{equation}
The hermitian form is conformally balanced
\begin{equation}
    \label{eq:ConfBal}
    d \left(e^{-2 \Phi} \omega^2\right) =0~,
\end{equation}
where $\Phi = -{1 / 2} \log |\Omega|_\omega$. 

On the other hand the Green--Schwarz anomaly together with coupling of supergravity to a gauge field forces $H$ to satisfy the heterotic Bianchi identity. Given the supersymmetry constraint \eqref{eq:AnomalyTorsion}, this can be written as
\begin{equation}
    \label{eq:BI2}
    i\partial\bar\partial\omega=\tfrac{\alpha'}{2}\left({\rm tr}(F^2)- {\rm tr}(R^2)\right)\:,
\end{equation}
where $R$ is the curvature of a connection on the tangent bundle. As mentioned above, this connection is either chosen to be the Chern connection (which was Strominger's original choice \cite{Strominger:1986uh}), a Hermitian Yang--Mills connection (leading to an exact solution of the heterotic equations of motion), or the Hull connection \cite{Hull:1986kz} (the appropriate choice from a heterotic supergravity point of view, which also solves the equations of motion modulo ${\cal O}(\alpha'^2)$ corrections). We will leave the choice of connection ambiguous, and comment more on the specific choices throughout the text. 

We will refer to equations \eqref{eq:HolOm}, \eqref{eq:Ahol}, \eqref{eq:HYM}, \eqref{eq:ConfBal} and \eqref{eq:BI2} as the Hull--Strominger system, the geometric system of equations under consideration in this text.

\section{Infinitesimal moduli}
\label{sec:Moduli}
It has been known since Atiyah's seminal work \cite{Atiyah:1955} that the simultaneous infinitesimal deformations of the complex structure and holomorphic gauge bundle need not result in the direct sum of cohomologies $H_{\bar\partial}^{(0,1)}(T^{(1,0)}X)$ and $H^{(0,1)}_{\bar{\partial}_{\mathcal A}}(End_0(E))$, counting complex structure deformations and deformations of the holomorphic connection on $E$ respectively. Rather, the corresponding simultaneous deformations are counted by $H^{(0,1)}_{\bar D_1}(Q_1)$, where $\bar D_1$ is defined via a short exact extension
\begin{equation}
    0\rightarrow\Omega_{\bar{\partial}_{\mathcal A}}^{(0,p)}(End_0(E))\rightarrow\Omega^{(0,p)}_{\bar D_1}(Q_1)\rightarrow\Omega_{\bar\partial}^{(0,p)}(T^{(1,0)}X)\rightarrow0\:,
\end{equation}
with extension class $[F]\in H^{(0,1)}(End_0(E)\otimes {T^{*1,0}X})$ given by the curvature of the gauge connection, also called the Atiyah class. This is due to the fact that some directions in the complex structure moduli space need not preserve the holomorphicity of the bundle. Note that the Atiyah class may also be viewed as a map
\begin{equation}
    [F]:\;\;\;H_{\bar\partial}^{(0,1)}(T^{(1,0)}X)\rightarrow H_{\bar{\partial}_{\mathcal A}}^{(0,2)}(End_0(E))\:.
\end{equation}
The infinitesimal deformations may then be computed via the resulting long exact sequence of cohomologies, resulting in\footnote{Here we assume that the gauge bundle is stable, and so there are no holomorphic sections, i.e. $H_{\bar{\partial}_{\mathcal A}}^{0}(End_0(E))=0$.}
\begin{equation}
      H^{(0,1)}_{\bar D_1}(Q_1)\cong H_{\bar{\partial}_{\mathcal A}}^{(0,1)}(End_0(E))\oplus\ker([F])\:,
\end{equation}
where $\ker([F])\subseteq H_{\bar\partial}^{(0,1)}(T^{(1,0)}X)$ are the complex structure moduli which preserve the holomorphicity of the bundle. This mechanism was applied in \cite{Anderson:2010mh, Anderson:2011ty, Anderson:2011cza, Cicoli:2013rwa} in the context of the heterotic moduli problem. 

It was further demonstrated in \cite{Anderson:2014xha, delaOssa:2014cia} that the additional complications of the Green-Schwarz anomaly, or heterotic Bianchi identity, may be encoded as an additional Atiyah type extension. Specifically, the anomaly cancellation gives rise to an Atiyah type map
\begin{equation}
    [{\cal H}]:\;\;\;H_{\bar\partial}^{(0,p)}(Q_1)\rightarrow H_{\bar{\partial}_{\mathcal A}}^{(0,p+1)}(T^{*(1,0)}X)\:,
\end{equation}
which is further used to define the differential $\bar D$ whose first cohomology computes the infinitesimal moduli of the Hull--Strominger system. Specifically, the differential $\bar D$ encodes the heterotic moduli complex as an additonal short exact extension
\begin{equation}
    0\rightarrow\Omega_{\bar{\partial}}^{(0,p)}(T^{*(1,0)}X)\rightarrow\Omega^{(0,p)}_{\bar D}(Q)\rightarrow\Omega_{\bar D_1}^{(0,p)}(Q_1)\rightarrow0\:.
\end{equation}
Again the infinitesimal moduli are computed, using homological algebra, via the corresponding long exact sequence in cohomology as\footnote{We make the physically reasonable assumption that the tangent bundle has no holomorphic sections, $H^{0}(T^{(1,0)}X)=0$.}
\begin{equation}
    H^{(0,1)}_{\bar D}(Q)\cong H^{(0,1)}(T^{*(1,0)}X)\oplus\ker([{\cal H}])\:.
\end{equation}
We may interpret $H^{(0,1)}(T^{*(1,0)}X)$ as the infinitesimal (complexified) hermitian moduli, generalising Kähler moduli in the Calabi--Yau case.

The computations of \cite{Anderson:2014xha, delaOssa:2014cia} however employed the mathematical trick mentioned above, where the connection on the tangent bundle, whose curvature appears in the Bianchi identity, is promoted to its own degree of freedom. This makes the moduli problem more tractable from a mathematical point of view, but means that we are in effect over-counting the physical moduli, with extra ``spurious" modes coming from this connection. This issue was corrected in \cite{McOrist:2021dnd}, where the correct map $[\cal H]$ was identified, giving the correct differential $\bar D$ for the physical heterotic moduli problem, at least modulo ${\cal O}(\alpha'^2)$ corrections. In \cite{MPS24, deLazari:2024zkg}, the operator was corrected to an exact nilpotent operator, whose first cohomology computes the spectrum of the Hull--Strominger system\footnote{Strictly speaking, that $H^{(0,1)}_{\bar D}(Q)$ counts the infinitesimal heterotic moduli is again only known modulo ${\cal O}(\alpha'^2)$ corrections. This is however ok from the point of view of physics, as the Hull--Strominger system gets corrected at ${\cal O}(\alpha'^2)$.}. Schematically, the operator reads
\begin{equation}
\label{eq:Dbar}
    \bar{D} = 
    \begin{bmatrix}
        \bar\partial & {\cal H}  \\
        0 & \bar D_1
    \end{bmatrix}
    =
    \begin{bmatrix}
        \bar\partial\; &\; \alpha' {\cal F}\; &\; {\cal T} + \alpha'  {\cal R}\cdot\nabla^+  \\
        0 & \bar\partial_{\cal A} & {\cal F} \\
        0 & 0 & \bar\partial
    \end{bmatrix}\:,
\end{equation}
where
\begin{equation}
    \bar D:\;\;\Omega^{(0,p)}(Q)\rightarrow\Omega^{(0,p+1)}(Q)\:,
\end{equation}
and
\begin{equation}
    Q=T^{*(1,0)}X\oplus End_0(E)\oplus T^{(1,0)}X\:.
\end{equation}
For $\gamma\in\Omega^{(0,p)}(End_0(V))$ and $\mu\in\Omega^{(0,p)}(T^{(1,0)}X)$ we define 
\begin{equation}
\label{eq:sF}
    \left({\cal F} \gamma\right)_a 
    = tr(F_a\wedge\gamma), \quad {\cal F}\mu = F_b\wedge\mu^b \:,
\end{equation}
where $F_a=F_{a \bar{b}}d\bar{z}^{\bar{b}}$. We also define
\begin{equation}
    \left({\cal T}\mu\right)_a=T_{ba \bar{c}} dz^{\bar c}\wedge\mu^b\:,
\end{equation}
where $T=i\partial\omega$, often referred to as the torsion. 

We further define
\begin{equation}
    \left({\cal R}\cdot\nabla^+\mu\right)_a=R_{a}{}^b{}_c\wedge\nabla^+_b\mu^c\:,
\end{equation}
where $R_a=R_{a \bar{b}}d\bar{z}^{\bar{b}}$ is the curvature of the Chern connection (if one uses the Chern connection in the Bianchi identity). The connection $\nabla^+$ acts as the Chern connection on anti-holomorphic indecies, and as the Bismut connection on holomorphic indecies. That is, 
\begin{equation}
\label{nabla+}
    \nabla^+_aW^b =\nabla_aW^b-T^b{}_{ac}W^c\:,
    \quad
    W\in \Gamma(T^{(1,0)}X)\:,
\end{equation}
where $\nabla$ is the Chern conncetion. Note that if instead one wishes to use a hermitian Yang--Mills connection to define the curvature of the Bianchi identity (solving the heterotic equations of motion exactly), then the curvature of ${\cal R}\cdot\nabla^+$ should be replaced by the curvature of this connection, while $\nabla^+$ should be defined using this hermitian connection, and the torsion of the corresponding Hermitian Yang--Mills metric. 

With these definitions, the operator $\bar D$ is nilpotent, i.e. $\bar D^2=0$, if and only if the heterotic Bianchi identity is satisfied. Furthermore, the infinitesimal moduli of the Hull--Strominger system are counted by $H^{(0,1)}_{\bar D}(Q)$ \cite{McOrist:2021dnd, MPS24, deLazari:2024zkg}. Furthermore, the corresponding moduli complex
\begin{equation}
    \label{eq:Dcomplex}
    0\rightarrow\Omega^{(0,0)}(Q)\xrightarrow{\bar D}\Omega^{(0,1)}(Q)\xrightarrow{\bar D}\Omega^{(0,2)}(Q)\xrightarrow{\bar D}\Omega^{(0,3)}(Q)\rightarrow0
\end{equation}
was shown to be elliptic in \cite{Chisamanga:2024xbm, deLazari:2024zkg}, and so the corresponding cohomologies are finite-dimensional. 

The cohomology $H^{(0,1)}_{\bar D}(Q)$ has been computed in explicit examples, such as for the Standard Embedding \cite{Chisamanga:2024xbm}. Here it was found that the cohomology decomposes into Kähler moduli $H^{(1,1)}(X)$, bundle moduli $H^{(0,1)}(End_0(E))$ and complex structure moduli $H^{(2,1)}(X)$. In particular, they proved the following theorem
\begin{theorem}
    At the Standard embedding \cite{Candelas:1985en}, the cohomology $H^{(0,1)}_{\bar D}(Q)$ decomposes as
    $$H^{(0,1)}_{\bar D}(Q)\cong H^{(1,1)}(X)\oplus H^{(0,1)}(End_0(E))\oplus H^{(2,1)}(X)\:.$$
\end{theorem}
This was previously claimed in the literature, but not demonstrated explicitly until the work of \cite{Chisamanga:2024xbm}. Indeed, the technology to explicitly define and compute the infinitesimal deformations of heterotic solutions was until recently lacking, and remains so to a large extent for more generic solutions. However, $H^{(0,1)}_{\bar D}(Q)$ has also been computed in more non-physical explicit examples that are of mathematical interest, such as for the Iwasawa manifold \cite{deLazari:2024zkg}, as well as the the generalisation of the Hull--Strominger system as a coupled instanton on the Calabi--Eckmann manifold \cite{Silva:2024fvl, deLazari:2024zkg}.

Some additional comments about the operator $\bar D$ are warranted. Note that the presence of the derivative map ${\cal R}\cdot\nabla^+$ means that the operator is not a connection, in that it does not satisfy the Leibniz rule. Treating the connection on the tangent bundle as a separate degree of freedom leads to a similar operator on
\begin{equation}
Q'=T^{*(1,0)}X\oplus End_0(TX)\oplus End_0(E)\oplus T^{(1,0)}X\:,    
\end{equation}
where $End_0(TX)$ correspond to the additional degrees of freedom \cite{Anderson:2014xha, delaOssa:2014cia}. However, in this case the derivative map ${\cal R}\cdot\nabla^+$ is not present in the operator, and so the operator actually defines a holomorphic structure, or the $(0,1)$-part of a connection on $Q'$. This means that mathematical tools such as Serre-duality and \v{C}ech cohomology are readily available. However, this comes at a prise. For one, as noted above, the extra degrees of freedom are non-physical. As a consequence, they come with a negative kinetic terms in the supergravity action. This will inevitably lead to problems when it comes to studying quantum aspects, as fluctuating un-physical degrees of freedom might introduce unwanted anomalies, couplings, etc. Moreover, mathematically, the natural metric on $Q'$ has in-definite signature. This makes it harder to apply Hodge-theory, study flow equations, and prove analogs of the Donalson--Uhlenbeck--Yau theorem \cite{donaldson1985anti, uhlenbeck1986existence}, though progress has been made in this direction, see \cite{Garcia-Fernandez:2023nil, Garcia-Fernandez:2023vah, Garcia-Fernandez:2024ypl}. 

Though not a connection, the operator $\bar D$ on $Q$ mimics many of the features of a holomorphic structure, and many recent developments have been in the direction of understanding to what extent features of ordinary holomorphic structures carry over to $\bar D$. We will review some of these results next.

\section{Serre duality and index}
\label{sec:Serre}
One property enjoyed by a holomorphic vector bundle $V\rightarrow X$ equipped with a holomorphic structure $\bar\partial_{\cal A}$ is Serre duality. In particular, if the complex manifold $X$ is also equipped with a nowhere vanishing holomorphic top-form $\Omega$ (i.e. the canonical bundle is trivial), then
\begin{equation}
    H_{\bar\partial_{\cal A}}^{(0,p)}(V)\cong H^{(0,n-p)}(V^*)\:,
\end{equation}
where $n={\rm dim}_{\mathbb{C}}(X)$. As the topological sum of bundles making up $Q$ is self-dual, one might expect this to be true as well for the cohomologies of $\bar D$, in the sense that $Q$ is ``cohomologically self-dual", 
\begin{equation}
    H^{(0,p)}_{\bar D}(Q)\cong H^{(0,3-p)}_{\bar D}(Q)\:.
\end{equation}
Using ordinary Serre-duality techniques for holomorphic structures, this can be shown in the case when the spurious degrees of freedom are included \cite{Chisamanga:2024xbm, deLazari:2024zkg}. In fact, in \cite{deLazari:2024zkg} it was also shown to be true for the $\bar D$ operator \eqref{eq:Dbar}, where the following theorem was proven
\begin{theorem}
    The cohomology groups $H^{(0,p)}_{\bar D}(Q)$ satisfy the Serre-duality property 
    $$H^{(0,p)}_{\bar D}(Q)\cong H^{(0,3-p)}_{\bar D}(Q)\:.$$
    \end{theorem}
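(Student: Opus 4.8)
The plan is to mimic the classical Hodge-theoretic proof of Serre duality, but adapted to the non-connection operator $\bar D$. First I would introduce a Hermitian metric on each summand of $Q = T^{*(1,0)}X \oplus End_0(E) \oplus T^{(1,0)}X$ --- the Hermitian metric $g$ on $X$ induces metrics on $T^{(1,0)}X$ and its dual, and a Hermitian Yang--Mills metric on $E$ induces one on $End_0(E)$ --- together with the volume form from $\omega^3$ (equivalently $\Omega \wedge \bar\Omega$). Using the holomorphic top-form $\Omega$ one then defines a conjugate-linear Hodge-type star operator $\star : \Omega^{(0,p)}(Q) \to \Omega^{(0,3-p)}(Q^*)$, and since the topological sum $Q$ is self-dual (the pairing of $T^{*(1,0)}X$ with $T^{(1,0)}X$ and the Killing form on $End_0(E)$) we may regard $\star$ as landing in $\Omega^{(0,3-p)}(Q)$. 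The ellipticity of the complex \eqref{eq:Dcomplex}, already established in \cite{Chisamanga:2024xbm, deLazari:2024zkg}, guarantees a Hodge decomposition $\Omega^{(0,p)}(Q) = \mathrm{im}\,\bar D \oplus \mathrm{im}\,\bar D^\dagger \oplus \mathcal{H}^{(0,p)}$ with harmonic representatives, so $H^{(0,p)}_{\bar D}(Q) \cong \mathcal{H}^{(0,p)} = \ker \bar D \cap \ker \bar D^\dagger$.

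The core of the argument is then an identity relating $\bar D$ to its formal adjoint through $\star$: schematically $\bar D^\dagger = \pm \star\, \widetilde{\bar D}\, \star$ for an appropriate ``dual'' operator $\widetilde{\bar D}$ acting on $\Omega^{(0,\bullet)}(Q^*) \cong \Omega^{(0,\bullet)}(Q)$. The key step here is to compute $\widetilde{\bar D}$ explicitly from the block form \eqref{eq:Dbar} and check that, after the self-duality identification, it is again of the same shape --- that is, $\star$ conjugates the complex $(\Omega^{(0,\bullet)}(Q),\bar D)$ into a complex isomorphic to itself. The block structure makes this tractable: the diagonal $\bar\partial$ and $\bar\partial_{\mathcal A}$ entries dualize to $\bar\partial$ and $\bar\partial_{\mathcal A}$ in the expected way; the off-diagonal maps $\mathcal{F}$, $\mathcal{T}$ and $\mathcal{R}\cdot\nabla^+$ need to be checked one at a time. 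For $\mathcal{F}$ this is a standard transpose-of-wedging computation. The genuinely new and delicate piece is the derivative term $\mathcal{R}\cdot\nabla^+$: I would compute its formal adjoint, integrating by parts to move $\nabla^+$ off $\mu$, and verify that the resulting first-order operator, once transported through $\star$ and the self-duality pairing, reproduces a term of the same type (with the roles of $T^{(1,0)}X$ and $T^{*(1,0)}X$ swapped). This is where the heterotic Bianchi identity should re-enter: the curvature constraint \eqref{eq:BI2} is precisely what made $\bar D^2 = 0$, and one expects the same constraint (and the relations among $T = i\partial\omega$, $R$, $F$, and the conformally balanced condition \eqref{eq:ConfBal}) to be needed to close the adjoint computation.

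I expect the main obstacle to be exactly this last point: showing that the adjoint of the first-order, non-tensorial map $\mathcal{R}\cdot\nabla^+$ fits back into the $\bar D$ shape. Unlike the purely algebraic $\mathcal{F}$ and $\mathcal{T}$ blocks, $\mathcal{R}\cdot\nabla^+$ involves the mixed Chern/Bismut connection $\nabla^+$ of \eqref{nabla+}, so integration by parts produces extra torsion and curvature terms; one must show these either cancel or assemble into the correct dual term. Controlling this will likely require the conformally balanced identity and the fact that $\nabla^+$ is metric-compatible in the appropriate sense, so that the divergence terms vanish against $e^{-2\Phi}\omega^3$. Once the operator identity $\bar D^\dagger \cong \pm \star \bar D \star$ is established, the theorem follows formally: $\star$ restricts to an isomorphism $\mathcal{H}^{(0,p)} \xrightarrow{\sim} \mathcal{H}^{(0,3-p)}$ of harmonic spaces (it intertwines $\ker\bar D \cap \ker\bar D^\dagger$ at degree $p$ with the same at degree $3-p$), and hence $H^{(0,p)}_{\bar D}(Q) \cong H^{(0,3-p)}_{\bar D}(Q)$. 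It would also be worth remarking how the argument is modified if one uses the Chern connection versus a Hermitian Yang--Mills connection in \eqref{eq:BI2}, since the definition of $\nabla^+$ and the relevant torsion change accordingly, paralleling the case-by-case comments made earlier in the text.
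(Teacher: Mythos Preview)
The paper does not actually prove this theorem: it only states it and attributes the proof to \cite{deLazari:2024zkg}. The surrounding text hints that the argument in that reference uses ``ordinary Serre-duality techniques'' adapted to the non-connection operator $\bar D$, but no details are given here. So there is no in-paper proof to compare your proposal against.

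That said, your outline is a sensible reconstruction of what such an argument would look like, and you have correctly isolated the non-trivial point: the off-diagonal first-order block $\mathcal{R}\cdot\nabla^+$. The diagonal and zeroth-order off-diagonal pieces dualise by standard linear algebra, so the content is entirely in showing that the formal adjoint of $\mathcal{R}\cdot\nabla^+$ lands back in the correct slot after the self-duality identification of $Q$ with $Q^*$. Your proposal to integrate by parts and absorb the residual torsion/curvature terms using the conformally balanced condition \eqref{eq:ConfBal} and the Bianchi identity \eqref{eq:BI2} is the natural thing to try. One cautionary remark: you write the expected identity as $\bar D^\dagger \cong \pm\star\bar D\star$, but because $\bar D$ is upper-triangular and the self-duality pairing swaps the first and third summands of $Q$, what you should actually find is that $\star$ conjugates $\bar D^\dagger$ into an operator of the same \emph{shape} as $\bar D$ (upper-triangular with the same type of entries) rather than literally $\bar D$ itself; it is this ``same-shape'' operator that must be identified with $\bar D$ after the duality pairing, and this is where the detailed tensor computation lives.
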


The fact that the operator $\bar D$ enjoys a version of Serre duality has some interesting consequences. In particular, it means that 
\begin{equation}
    H^{(0,1)}_{\bar D}(Q)\cong H^{(0,2)}_{\bar D}(Q)\:.
\end{equation}
As $H^{(0,1)}_{\bar D}(Q)$ counts the infinitesimal deformations of the Hull--Strominger system, while $H^{(0,2)}_{\bar D}(Q)$ parametrises the space of obstructions, this means that the ``mathematically expected", or ``virtual", moduli space of the Hull--Strominger system is zero-dimensional! I.e. the expected moduli space is a set of points. With this in mind, one may speculate as to whether one can ``count'' the number of solutions to the Hull--Strominger system on a compact 6-manifold $X$, and if this carries any interesting information about $X$ or the chosen bundles over it.  Defining such a count motivates the study of the compactness theory for solutions. If such a well-defined number can be assigned to the solution space, the next intriguing question would be if, and in what way, it is invariant under suitable perturbations. 

The moduli space having an expected dimension of zero is also interesting from a physical point of view. Indeed, it is well known to be difficult in string theory to find (perturbative) string compactifications where all the moduli are ``stabilised''. This translates mathematically to having a zero-dimensional moduli space, and is known as the string theory moduli problem. One might hope that the moduli are stabilised by some unknown higher order perhaps non-perturbative or quantum effect, and the fact that the moduli space has expected dimension zero lends credence to this idea. See also \cite{Moroianu:2021kit, Moroianu:2023jof}, where rigidity results have been found for more general non-supersymmetric solutions to heterotic supergravity.

As a corollary to the Serre-duality result, it is straightforward to see that the heterotic deformation complex governed by $\bar D$ has vanishing Euler characteristic, that is
\begin{equation}
\chi(\bar D,Q)=\sum_{p=0}^3(-1)^p h^p(Q)=0\:,    
\end{equation}
where $h^p(Q)$ are the dimensions of the corresponding cohomology groups $H^{(0,p)}_{\bar D}(Q)$. Working within the $\alpha'$ expansion, it was shown in \cite{McOrist:2021dnd} that the infinitesimal moduli parametrising deformations of the Hull--Strominger system must also be in the kernel of an adjoint operator $\bar D^\dagger$, where the adjoint is defined with respect to the moduli space metric \cite{Candelas:2016usb, Candelas:2018lib, McOrist:2019mxh}, see also \cite{Ashmore:2019rkx, Garcia-Fernandez:2020awc}. For $\alpha'$ sufficiently small this metric is positive definite. One may then use Hodge theory to conclude that $H^{(0,1)}_{\bar D}(Q)$ count the infinitesimal deformations of the Hull--Strominger system. Furthermore, one can define a Dirac type operator 
\begin{equation}
    {\cal D}=\bar D+\bar D^\dagger\:,
\end{equation}
whose index agrees with $\chi(\bar D,Q)$, that is
\begin{equation}
    {\rm Index}({\cal D})=\chi(\bar D,Q)=0\:.
\end{equation}
This shows that the moduli problem of the Hull--Strominger system is an ``index zero moduli problem''. Such moduli problems have a tendency to be more mathematically well-behaved, both from a differential geometry and algebraic geometry point of view.

The vanishing of the index, or Euler Characteristic, is a straightforward consequence of the Serre duality result. There is, however, a simpler proof of the index vanishing, which is more in line with the topological nature of the index, which we present here. 

\begin{theorem}
    The complex \eqref{eq:Dcomplex} has vanishing Euler characteristic.
\end{theorem}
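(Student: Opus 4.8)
The plan is to treat this as a purely topological (index‑theoretic) statement and never compute a single cohomology group. The starting point is that, since the complex \eqref{eq:Dcomplex} is elliptic, its Euler characteristic equals the index of the assembled (``rolled–up'') operator $\bar D+\bar D^{*}\colon \Omega^{(0,\mathrm{even})}(Q)\to\Omega^{(0,\mathrm{odd})}(Q)$ for any choice of Hermitian metrics, and this index depends only on the class of the principal symbol of $\bar D$ in $K^{0}(T^{*}X)$. First I would read off that symbol from \eqref{eq:Dbar}. The blocks $\alpha'{\cal F}$, ${\cal F}$ and ${\cal T}$ are of order zero and hence invisible to the leading symbol; the Chern/Bismut diagonal entries $\bar\partial$ and $\bar\partial_{\cal A}$ all have the Dolbeault symbol; and the only term contributing beyond the diagonal is the first–order operator $\alpha'\,{\cal R}\cdot\nabla^{+}$, which sits in the strictly upper–triangular corner sending the $T^{(1,0)}X$ summand of $Q=T^{*(1,0)}X\oplus End_0(E)\oplus T^{(1,0)}X$ into the $T^{*(1,0)}X$ summand. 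Thus the assembled symbol of $\bar D$ is block upper–triangular with respect to this splitting of $Q$, with each diagonal block equal to the assembled Dolbeault symbol (invertible off the zero section by ellipticity of $\bar\partial$). A block upper–triangular endomorphism is invertible iff its diagonal blocks are, so scaling the corner block by $t\in[0,1]$ is a homotopy through invertible symbols; at $t=0$ we land on the symbol of $\bar\partial\otimes\mathrm{id}_{Q}$. Therefore
\[
\chi(\bar D,Q)=\chi\!\left(\bar\partial,\,T^{*(1,0)}X\right)+\chi\!\left(\bar\partial,\,End_0(E)\right)+\chi\!\left(\bar\partial,\,T^{(1,0)}X\right)=\chi(X,Q)\:,
\]
the ordinary holomorphic Euler characteristic of $Q$; note that the rescaled operator need not itself satisfy $\bar D_t^2=0$ for $t\in(0,1)$, but that is irrelevant since only the symbol enters the index.

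The second step evaluates $\chi(X,Q)$ by Hirzebruch--Riemann--Roch, valid on any compact complex manifold, $\chi(X,Q)=\int_{X}\mathrm{ch}(Q)\,\mathrm{td}(TX)$. The essential input is that $Q$ is holomorphically self–dual: $T^{*(1,0)}X$ and $T^{(1,0)}X$ are mutually dual, and $End_0(E)\cong End_0(E)^{*}$ via the holomorphic, fibrewise non‑degenerate trace pairing, so $Q^{*}\cong Q$. Hence $\mathrm{ch}(Q)=\mathrm{ch}(Q^{*})$, and since $\mathrm{ch}_k(Q^{*})=(-1)^{k}\mathrm{ch}_k(Q)$ this forces the odd components $\mathrm{ch}_1(Q)=\mathrm{ch}_3(Q)=0$ in rational cohomology. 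On the other hand $X$ has trivial canonical bundle, $c_1(X)=0$, so $\mathrm{td}_1(TX)=\tfrac12 c_1(X)=0$ and $\mathrm{td}_3(TX)=\tfrac1{24}c_1(X)c_2(X)=0$. The top–degree part of the integrand is $\mathrm{ch}_3(Q)\,\mathrm{td}_0+\mathrm{ch}_2(Q)\,\mathrm{td}_1+\mathrm{ch}_1(Q)\,\mathrm{td}_2+\mathrm{ch}_0(Q)\,\mathrm{td}_3$, and all four terms vanish by the two observations above. Hence $\chi(X,Q)=0$, and combined with the first step this proves the theorem. (Alternatively, and more in keeping with the Serre–duality discussion above: since $K_X\cong\mathcal O_X$ and $Q\cong Q^{*}$, Serre duality gives $h^{p}(Q)=h^{3-p}(Q)$ for the Dolbeault cohomology of $Q$, and because $3$ is odd this yields $\chi(X,Q)=-\chi(X,Q)$.)

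I expect the only genuinely delicate point to be the homotopy invariance claimed in the first paragraph, specifically the assertion that the first–order term ${\cal R}\cdot\nabla^{+}$ does not affect the index. This rests on two things that must be checked rather than asserted: that the block upper–triangular structure is genuine, i.e.\ that no first–order piece hides on the diagonal or in the $End_0(E)\to T^{*(1,0)}X$ position — which is clear from \eqref{eq:Dbar}, \eqref{eq:sF} and the definitions of ${\cal T}$ and ${\cal R}\cdot\nabla^{+}$, the offending term living solely in the top–right corner — and that the Dolbeault symbol on each summand is invertible off the zero section, which is just ellipticity of $\bar\partial$ and is exactly what keeps the whole straight‑line homotopy inside the elliptic locus. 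Everything downstream is a routine application of Hirzebruch--Riemann--Roch (or Serre duality) together with the self‑duality of $Q$ and the Calabi--Yau condition $c_1(X)=0$.
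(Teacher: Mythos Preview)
Your argument is correct, and the underlying strategy---deform the off-diagonal part of $\bar D$ to zero and then invoke self-duality of $Q$---is exactly the one the paper uses. The implementations differ in a way worth recording. The paper rescales the full operator, using the specific weights $(t,t,t^{2})$ in
\[
\bar D_t=\begin{bmatrix}\bar\partial & t\,\alpha'{\cal F} & t^{2}({\cal T}+\alpha'{\cal R}\cdot\nabla^{+})\\ 0 & \bar\partial_{\cal A} & t\,{\cal F}\\ 0&0&\bar\partial\end{bmatrix}
\]
so that $\bar D_t^{2}=0$ for all $t$; one then has an honest one-parameter family of elliptic complexes and simply quotes continuity of the Euler characteristic. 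You instead work purely at the level of principal symbols: only the ${\cal R}\cdot\nabla^{+}$ corner survives there, the zeroth-order maps ${\cal F}$ and ${\cal T}$ drop out, and scaling that single corner by $t$ keeps the symbol sequence exact off the zero section (the block upper-triangular structure with Dolbeault diagonals does the work). As you note, nilpotency of the intermediate operators is then irrelevant because the index is determined by the $K$-theory class of the symbol complex. Both routes land on $\chi(\bar D,Q)=\chi(X,Q)$; the paper finishes with ordinary Serre duality for the diagonal holomorphic structure, whereas your Hirzebruch--Riemann--Roch computation (odd Chern characters of a self-dual bundle vanish, odd Todd classes vanish since $c_1(X)=0$) makes the same conclusion explicit. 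The paper's version has the virtue of staying within the language of elliptic complexes and making the nilpotency visible; yours is closer to pure index theory and avoids having to discover the $(t,t,t^{2})$ weights.
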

\begin{proof}
    The proof relies on constructing a path of elliptic complexes for operators $\bar D_t$, for $t\in[0,1]$, where $\bar D_1=\bar D$ and $\bar D_0$ is the diagonal operator on $Q=T^{*(1,0)}X\oplus End_0(E)\oplus T^{(1,0)}X$. Specifically
        \begin{equation}
            \bar{D}_t =
        \begin{bmatrix}
            \bar\partial\; &\; t(\alpha' {\cal F})\; &\; t^2\left({\cal T} + \alpha'  {\cal R}\cdot\nabla^+\right)  \\
            0 & \bar\partial_{\cal A} & t({\cal F}) \\
            0 & 0 & \bar\partial
        \end{bmatrix}\:.
        \end{equation}
    It is easy to check that the $\bar D_t$ are nilpotent, as $\bar D_0$ is clearly nilpotent, and for $t\neq0$ we have $\bar D_t^2=0$ if and only if $\bar D^2=0$. Furthermore, the corresponding complexes $\Omega_{\bar D_t}^*(Q)$ are elliptic by a similar argument found in \cite{Chisamanga:2024xbm} for the ellipticity of the complex $\Omega_{\bar D}^*(Q)$ in \eqref{eq:Dcomplex}. 
    
    We thus have a path of elliptic complexes from $\Omega_{\bar D}^*(Q)$ to $\Omega_{\bar D_0}^*(Q)$ with the diagonal holomorphic structure. The Euler characteristic is preserved under such a path. Furthermore, as the bundle $Q$ with the diagonal holomorphic structure is clearly self-dual, its Euler characteristic vanishes by ordinary Serre duality. The result follows. \qed 
\end{proof}

\section{\v{C}ech cohomology and a Dolbeault theorem}
\label{sec:Cech}
In this section we review how a solution to the Hull--Strominger system defines a notion of \v{C}ech cohomology. We also present (part of) a Dolbeault-type theorem, namely that the first \v{C}ech cohomology group is isomorphic to the first cohomology group defined by the deformation operator $\bar D$, that is $H^{(0,1)}_{\bar D}(Q)$ which parametrises infinitesimal moduli as reviewed above. This more algebraic perspective on the cohomology should prove useful in understanding and computing the deformation and obstruction spaces for heterotic $SU(3)$ solutions.

Let's recall that a Dolbeault operator $\bar{\partial}_V$,  or holomorphic structure, on a rank $r$ complex bundle $\pi :
V \to X$ is an operator
\begin{equation}
    \bar{\partial}_V: \Gamma(\Lambda^{p} T^{*(0,1)}X \otimes E  ) \to \Gamma(\Lambda^{p+1} T^{*(0,1)}X \otimes E  )
\end{equation}
satisfyng the Cauchy equation ($\bar{\partial}_V^2=0$) and the graded Leibniz rule. We would like to view $\bar{D}$ as a “Dolbeault-type” operator with the
nilpotency $\bar{D}^2=0$ as an integrability condition. However, such an operator does not satisfy Leibniz rule, and as such does not define a holomorphic structure. It would then interesting to see what of the various classical results about holomorphic vector bundles still hold or how they are affected. This might be important for future works in order to study notions of stability and get a better understanding of the moduli space of the Hull--Strominger system, and potentially geometric invariants and enumarive geometry for heterotic geometries. 

A basic result of Dolbeault operators is that they can be identified locally with the usual Dolbeault operator $\bar\partial$. For suitable open sets $U_i$ of
$X$ and trivialisations $\phi_i: \pi_i^{-1}(U_i) \to U_i \times \mathbb{C}^r $ one has
\begin{equation}
     \bar{\partial}_V=\phi_{i}^{-1} \circ \bar \partial \circ \phi_i
\end{equation}
Similarly, one can show that the operator $\bar{D}$ also admits trivializations on contractible patches, i.e,
\begin{equation}
\label{eq:LocTriv}
    \bar{D}=\phi_{i}^{-1} \circ \bar \partial \circ \phi_i
\end{equation}
on some contractible open sets $U_i$ \cite{deLazari:2024zkg}. Schematically, the maps $\phi_i$ and $\phi_i^{-1}$ take the form
\begin{align}
    \phi_i &=
        \begin{bmatrix}
            1\; &\; \alpha' A_i\; &\; {\tau}_i + \alpha'  {\Gamma}_i\cdot\nabla^+  \\
            0 & 1 & A_i \\
            0 & 0 & 1
        \end{bmatrix}\:,\\
    \notag\\
    \phi_i^{-1}&=
        \begin{bmatrix}
            1\; &\; -\alpha' A_i\; &\; \alpha' A_i\cdot A_i -({\tau}_i + \alpha'  {\Gamma}_i\cdot\nabla^+)  \\
            0 & 1 & -A_i \\
            0 & 0 & 1
        \end{bmatrix}\:.
\end{align}
Here $A_i$ is the local connection one-form on the gauge bundle, and $\Gamma_i$ is the local one-form of the Chern connection of either $\omega$ (using the Chern connection in the Bianchi identity), or the Hermitian Yang--Mills metric (using a Hermitian Yang--Mills connection in the Bianchi identity). Furthermore, $\tau_i\in\Gamma(U_i,T^{*(1,0)}X\otimes T^{*(1,0)}X)$ is defined locally by a decent procedure using the heterotic Bianchi identity. It should be noted that $\tau_i$ need not be anti-symmetric in its indices, as is the case for the $B$-field in more conventional generalised geometry.  

We note that the complication here is that, due to the term $\alpha'  {\cal R}\cdot\nabla^+$ in equation \eqref{eq:Dbar}, the trivializations  $\phi_i$ involve holomorphic derivatives that prevent $\phi_i$ of being $C^{\infty}(X)$-linear, and therefore not an element of $ \Gamma(U_i, End_0(Q))$, but rather a linear homomorphism. 

Another classical result of complex geometry that we consider here, is the Dolbeault theorem, which relates the Dolbeault cohomology of a holomorphic vector bundle $V \to X$, with the \v{C}ech cohomology of the bundle in the following way.  
\begin{equation}
    H_{\bar \partial}^{(p,q)}(V) \cong \check{H}^{q}(\Omega ^p \otimes V)  
\end{equation}
where $\Omega^p$ is the bundle of holomorphic $p$-forms. It is interesting to know whether there is a Dolbeault type theorem for this operator. In \cite{deLazari:2024zkg}, and analogous result was proven for a particular case. We use the trivializations before to define \v{C}ech Cohomology. Let $\mathcal{U}=\{ U_i \}_{i \in I}$ be a good cover of $X$ and denote $U_{i_1,i_2,...,i_k}:= \bigcap_{l=1}^{k} U_{i_l}$. On $U_{i,j}$ we define the transition maps
\begin{equation}
    \psi_{ij}:= \phi_i \phi_j^{-1} : \Gamma(Q \restriction (U_{i,j}))\to \Gamma(Q \restriction (U_{i,j}))\:. 
\end{equation}
It is straight forward to check that the transition maps are holomorphic, in the sense that $[\bar\partial,\psi_{ij}]=0$. They also satisfy the cocycle conditions 
\begin{equation}
    \psi_{ij}\circ\psi_{ji}={\rm id}\;\;{\rm on}\;\; U_{i,j}\;\;\;{\rm and}\;\;\;\psi_{ij}\circ\psi_{jk}\circ\psi_{ki}={\rm id}\;\;{\rm on}\;\; U_{i,j,k}\:,
\end{equation}
as for transition functions of ordinary holomorphic bundles. 

The bundle $Q$ together with the transition maps $\psi_{ij}$, which we may denote $(Q,\psi_{ij})$, then satisfy many properties of holomorphic bundles. For example, we can define a notion of \v{C}ech cohomology which we review next. The cochain complex is defined by:
\begin{equation}
    \check{C}^k(\Omega^0(Q))=\prod\{ 
 \Omega^0(U_{i_0,...,i_k},Q) : i_0<i_1<...<i_k   \}\:,
\end{equation} 
where $\Omega^0(U_{i_0,...,i_k},Q)$ denotes the set of $\bar\partial$-holomorphic $Q$-valued functions on $U_{i_0,...,i_k}$. The coboundary operator $\delta$ is then defined on a \v{C}ech $p$-cochain $\{\gamma_{i_0,...,i_k}\}\in\check{C}^k(\Omega^0(Q))$ by 
\begin{equation}
(\delta\gamma)_{i_0,...,i_{k+1}}=\psi_{i_0 i_1} \gamma_{i_1,...,i_{k+1}} + \sum_{l=1}^{k+1} (-1)^l \gamma_{i_0,...,\hat{i}_l,...,i_{k+1}}  
\end{equation}
One can check that $\delta^2=0$, so it defines a \v{C}ech cohomology $\check{H}^{k} (\Omega^0(Q))$. Also, $[\delta, \bar\partial]=0$, which gives it good homological properties. With this definition of \v{C}ech cohomology, the following Dolbeault type theorem was proven in \cite{deLazari:2024zkg}:
\begin{theorem}
\label{Tm:ChechIso}
For $\bar D$-cohomology, and \v{C}ech cohomology we have the following
$$H_{\bar D}^{(0,1)}(Q) \cong \check{H}^{1} (\Omega^0(Q))\:.$$
\end{theorem}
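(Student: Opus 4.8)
The plan is to mimic the standard proof of the Dolbeault theorem, adapted to the non-$C^\infty(X)$-linear setting, using a partition-of-unity argument to establish a short exact sequence of sheaves. First I would set up the sheaf-theoretic framework: let $\mathcal{Q}$ denote the sheaf of $\bar\partial$-holomorphic sections of $(Q,\psi_{ij})$, i.e. local sections $s$ with $\bar\partial s = 0$ after trivialisation, glued via $\psi_{ij}$; and let $\mathcal{A}^{(0,q)}(Q)$ denote the sheaf of smooth $Q$-valued $(0,q)$-forms. The key local input is the trivialisation \eqref{eq:LocTriv}, which says $\bar D$ is locally conjugate to $\bar\partial$ on contractible patches. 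Combined with the local $\bar\partial$-Poincaré lemma (Dolbeault--Grothendieck), this gives exactness of the complex of sheaves
\begin{equation}
0\to\mathcal{Q}\to\mathcal{A}^{(0,0)}(Q)\xrightarrow{\bar D}\mathcal{A}^{(0,1)}(Q)\xrightarrow{\bar D}\mathcal{A}^{(0,2)}(Q)\to\cdots
\end{equation}
i.e. this is a fine resolution of $\mathcal{Q}$, since each $\mathcal{A}^{(0,q)}(Q)$ admits partitions of unity (the obstruction involving holomorphic derivatives in $\phi_i$ does not affect $C^\infty$-linearity of multiplication by a smooth bump function on the \emph{form} degree, only on the holomorphic structure).

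Next I would run the standard abstract de Rham / Dolbeault machinery: a fine resolution computes sheaf cohomology, so $H^q(X,\mathcal{Q}) \cong H^{(0,q)}_{\bar D}(Q)$, the hypercohomology collapsing because the resolving sheaves are acyclic. Simultaneously, the \v{C}ech complex $\check C^\bullet(\mathcal U, \mathcal Q)$ with the good cover $\mathcal U$ computes $H^q(X,\mathcal Q)$ as well (by Leray, since on a good cover all intersections are contractible hence $\mathcal Q$-acyclic). Here I must check that the \v{C}ech complex $\check C^k(\Omega^0(Q))$ with the twisted coboundary $\delta$ defined in the excerpt (which inserts $\psi_{i_0 i_1}$ in the first term) is precisely the \v{C}ech complex of the sheaf $\mathcal Q$ — this is where the holomorphy of the transition maps, $[\bar\partial,\psi_{ij}]=0$, and the cocycle conditions are used, to ensure $\delta$ is well-defined on $\mathcal Q$-valued cochains and squares to zero. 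Granting this, $\check H^1(\Omega^0(Q)) \cong H^1(X,\mathcal Q) \cong H^{(0,1)}_{\bar D}(Q)$.

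Alternatively — and this may be cleaner given that the excerpt only asks for the degree-one statement — I would give the explicit chase rather than invoking Leray. An element of $H^{(0,1)}_{\bar D}(Q)$ is represented by a global smooth $\bar D$-closed $(0,1)$-form $\eta$. Using the local trivialisations and the $\bar\partial$-Poincaré lemma, write $\eta = \bar D f_i$ on each $U_i$ for local $\bar\partial$-holomorphic-after-trivialisation $0$-cochains; then $f_i - f_j$ is $\bar D$-closed of degree $0$, hence (reading off the trivialisations) defines a holomorphic section of $Q$ on $U_{ij}$, and $\{f_i - f_j\}$ is a \v{C}ech $1$-cocycle for $\delta$. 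Conversely a \v{C}ech cocycle is patched into a global form via a smooth partition of unity $\{\rho_i\}$ subordinate to $\mathcal U$, setting $\eta = \sum_j \bar D(\rho_j\,\gamma_{ij})$ on $U_i$ (well-defined globally by the cocycle condition), and one checks this inverts the first map up to the respective coboundaries. The verifications that the two maps are well-defined on cohomology and mutually inverse are the routine part.

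The main obstacle I expect is precisely the point flagged in the excerpt: $\phi_i$ is not $C^\infty(X)$-linear because of the $\alpha'\,\Gamma_i\cdot\nabla^+$ block, so $\psi_{ij}$ is a differential operator (order one), not a bundle endomorphism. I must therefore be careful that (i) the partition-of-unity patching still produces a \emph{smooth form} and not something with spurious derivative-of-bump-function terms that fail to cancel — this should work because $\bar D$ itself absorbs the derivatives and the cocycle condition is an identity of \emph{operators}, but it needs checking term by term in the $2\times 2$ (really $3\times 3$) block structure; and (ii) the notion of ``$\bar\partial$-holomorphic $Q$-valued function'' transforms correctly under $\psi_{ij}$, i.e. that $\psi_{ij}$ maps $\ker\bar\partial$ to $\ker\bar\partial$ locally, which is exactly the content of $[\bar\partial,\psi_{ij}]=0$ stated above but which interacts nontrivially with the $\nabla^+$ term. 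Once these operator-ordering subtleties are dispatched, the proof is structurally identical to the classical Dolbeault isomorphism in degree one.
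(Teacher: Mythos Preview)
Your explicit-chase alternative is essentially the paper's approach: build the map $H^{(0,1)}_{\bar D}(Q)\to\check H^1(\Omega^0(Q))$ via the local $\bar D$-Poincar\'e lemma and differences of local potentials, then invert with a partition of unity \emph{corrected for the derivative terms in $\psi_{ij}$}, exactly the subtlety you flag. The only bookkeeping point to tighten is that the paper's \v{C}ech cochains live in the trivialised frame, so the cocycle is $\gamma_{ij}=\phi_i(\eta_i)-\phi_i(\eta_j)$ rather than $\eta_i-\eta_j$ directly; your parenthetical ``reading off the trivialisations'' is precisely this step, and with it the argument matches the paper's.
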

The proof of the theorem was done by constructing an explicit isomorphism between the cohomology groups. We will not repeat the proof here, but we will show how a class $[y]\in H^{(0,1)}_{\bar D}(Q)$ gives rise to a $\delta$-closed cochain $\{\gamma_{i,j}\}\in\check{C}^1(\Omega^0(Q))$.

Start by noting that for a representative $y\in[y]$, we have
\begin{equation}
    \bar D y=0\:.
\end{equation}
By the local trivialisation \eqref{eq:LocTriv}, is is straight-forward to show that $\bar D$ satisfies a Poincaré lemma. Therefore, locally on a patch $U_i$, we have
\begin{equation}
    y=\bar D\eta_i\:,
\end{equation}
for $\eta_i\in\Omega^{(0,0)}(U_i,Q)$. On an intersection $U_{i,j}$ (for $i<j$), as $y$ is globally defined, we must have
\begin{equation}
    \bar D(\eta_i-\eta_j)=0\;\;\Rightarrow\;\;\phi_i^{-1}\circ\bar\partial\circ\phi_i(\eta_i)-\phi_j^{-1}\circ\bar\partial\circ\phi_j(\eta_j)=0\:.
\end{equation}
It follows that 
\begin{equation}
    \bar\partial\circ\phi_i(\eta_i)-\psi_{ij}\circ\bar\partial\circ\phi_j(\eta_j)=0\:.
\end{equation}
Furthermore, since $[\bar\partial,\psi_{ij}]=0$, we get
\begin{equation}
    \bar\partial\left(\phi_i(\eta_i)-\phi_i(\eta_j)\right)=0\:.
\end{equation}
We define 
\begin{equation}
    \gamma_{i,j}=\phi_i(\eta_i)-\phi_i(\eta_j)\:.
\end{equation}
This is then an element $\{\gamma_{i,j}\}$ of $\check{C}^1(\Omega^0(Q))$. Furthermore, it is straight forward to show that $\delta\gamma=0$, and so $\{\gamma_{i,j}\}$ defines a class in $\check{H}^{1} (\Omega^0(Q))$. This defines a map from $H_{\bar D}^{(0,1)}(Q)$ to $\check{H}^{1} (\Omega^0(Q))$. One then needs to check that this map is well-defined, and has an inverse. This is done in \cite{deLazari:2024zkg}. It should be noted that the constructions of these maps follows a common recipe in \v{C}ech cohomology, using a partition of unity, except that this recipe must be corrected to account for the derivative terms in the maps $\psi_{ij}$.

It would be very interesting to extend Theorem \ref{Tm:ChechIso} to a general Dolbeault type theorem for cohomologies in general dimension, perhaps also for heterotic type geometries in dimensions other than $6$. In particular, as \v{C}ech cohomology is more algebraic in nature, this would help towards understanding the Hull--Strominger system from the perspective of algebraic geometry, and perhaps shed light on notions of stability for this system, and heterotic enumerative geometry. 

\section{Conclusions and future directions}
\label{sec:Conclusion}
This article reviews recent developments in heterotic moduli, for six-dimensional heterotic $SU(3)$ solutions (the Hull--Strominger system), mostly presented in the papers \cite{McOrist:2021dnd, MPS24, Chisamanga:2024xbm, deLazari:2024zkg}. In particular, we discuss the nilpotent differential $\bar D$ associated to the heterotic moduli complex, whose first cohomology parametrises infinitesimal deformations of such solutions. This operator has many features which mimic a holomorphic structure, but is not a connection due to its failure of satisfying the Leibniz rule. Still, many interesting results can be derived, such as results about Serre-duality and the vanishing index of the heterotic moduli problem, and a definition of \v{C}ech cohomology and Dolbeault type theorems. 

There are many interesting future directions. It would be interesting to compute $H^{(0,1)}_{\bar D}(Q)$ for more examples, and to get a better handle on the ``Atiyah map" ${\cal H}$ in more generic situations. A good place to begin is perhaps backgrounds where the geometry admits a (zeroth order) Calabi--Yau metric. It would also be interesting to see if bounds can be found on the true number of (unobstructed) moduli, see e.g. \cite{Anderson:2011ty, delaOssa:2015maa} for previous work on this, and also \cite{Smith:2024ejf} in the case of type II solutions. It is however likely that this requires a better understanding of obstructions and the corresponding Maurer-Cartan equation (see below).

As also mentioned, it would be interesting to see if the Dolbeault theorem reviewed above can be generalised to higher cohomology groups, and to other dimensions and geometric structures where similar transition maps involving derivatives appear. One can ask questions about the construction of a \v{C}ech cohomology Atiyah class and its relation to a notion of curvature, and classifications in terms of differential cohomology. In particular, if a well-defined notion of curvature can be defined, one can then try to relate this to a Hermitian Yang--Mills constraint for heterotic geometries. In \cite{MPS24} a notion of curvature for $\bar D$ was defined modulo ${\cal O}(\alpha'^2)$ corrections. It was then shown that the Hull--Strominger system is satisfied if and only if this curvature is ``Hermitian Yang--Mills", again modulo ${\cal O}(\alpha'^2)$ corrections. It would be interesting to see if this statement can be made mathematically more precise, and possibly connect it to a statement about stability for heterotic solutions ala Donalson--Uhlenbeck--Yau. This would complement previous work in this direction \cite{Garcia-Fernandez:2023vah, Garcia-Fernandez:2023nil, Silva:2024fvl, Garcia-Fernandez:2024ypl}, where the spurious degrees of freedom where turned on. 

Another interesting question relates to the observation that
\begin{equation}
    H^{(0,1)}_{\bar D}(Q)\cong H^{(0,2)}_{\bar D}(Q)\:,   
\end{equation}
suggesting that the moduli space has expected, or ``virtual", dimension zero, perhaps in some ``derived stacky sense". In particular, can one define a perfect obstruction theory, and a notion of virtual fundamental class on the (derived) moduli space? This would be very useful for understanding geometric invariants and enumerative geometry of heterotic geometries, and the \v{C}ech cohomology reviewed above is expected to be useful in this regard. There is also work in this direction when one includes the non-physical spurious modes, see work by Tellez-Dominguez \cite{Tellez-Dominguez:2023wwr} explaining the right notion of higher holomorphic data to consider, and upcoming papers with co-authors. It would be very interesting to see how the ``holomorphic" operator $\bar D$ and its associated structures fit into this picture, and into the framework of higher geometry and gauge theory in general.

From a more pragmatic point of view, if one wants to have a better understanding and handle on obstructions, it is important to consider higher order deformations of the system. In \cite{Ashmore:2018ybe} the full Maurer--Cartan equation for the Hull--Strominger system was derived, though again with spurious modes turned on. It was shown that the natural $L_\infty$ algebra to consider for the deformation theory was an $L_3$ algebra, with close connections to the holomorphic Courant algebroid. Physically, it is tempting to speculate that the finite deformation problem, without spurious modes, will retain the natural structure of an $L_3$ algebra. This is because the Maurer--Cartan equation of \cite{Ashmore:2018ybe} can be seen as the equation of motion of a superpotential functional, derived for heterotic geometries in previous work \cite{Becker:2003yv, LopesCardoso:2003dvb, Gurrieri:2004dt}. Physics dictates that couplings between fields in the superpotential of quartic order and higher correspond to irrelevant operators, which to a large extent can be ignored. It would be interesting to see if this expectation holds true.

Finally, note that the superpotential theory of \cite{Ashmore:2018ybe} has many features similar to that of Kodaira--Spencer gravity \cite{Bershadsky:1993cx} and Donaldson--Thomas theory \cite{donaldson1998gauge, thomas1997gauge}. The holomorphic Chern--Simons action then give Donaldson--Thomas invariants a natural physical interpretation, in that they can be related to the partition function and other correlation functions of this theory. The heterotic superpotential provides a natural generalisation of this theory, with both geometric and gauge degrees of freedom turned on. Computing its partition function and other correlators would then give a more hands on physical approach to understanding geometric invariants in heterotic geometry. Some preliminary work in this direction may be found in \cite{Ashmore:2023vji, Ibarra:2024hfm}. In particular, in \cite{Ashmore:2023vji} the one-loop partition function of the quadratic action was computed and studied, and its dependence on the background geometry was investigated to determine to what degree it defines an invariant. It would be interesting to generalise such computations to the full superpotential action, without spurious modes, and to investigate the potential anomalies and geometric dependence of the partition function and other quantum correlators. In particular, this might lead to heterotic version of the holomorphic anomaly equation \cite{Bershadsky:1993ta}, a useful tool to bootstrap higher loop order (or higher genus) invariants from lower order ones. 

As a final remark, it should be mentioned that many of the above questions may also be posed for other geometric solutions to the heterotic string, such as exceptional $G_2$ and $Spin(7)$ compactifications. Indeed, these solutions provide a natural mathematical framework for understanding gauge theory coupled to geometry, in that they give rise to ``coupled instantons" in the sense of \cite{Silva:2024fvl}. In particular, the moduli problem of heterotic $G_2$ solutions was previously studied in \cite{Clarke:2016qtg, delaOssa:2017pqy, deIaOssa:2019cci, Clarke:2020erl, Kupka:2024rvl}, but again with the caveat that they include spurious modes of an associated tangent bundle connection. Recently, steps to remedy this issue was taken in \cite{McOrist:2024ivz}, providing a foothold to address such questions, without the spurious modes, in these more exotic settings.

\begin{acknowledgement}
We would like to thank Luis Álvarez-Cónsul, Anthony Ashmore, Beatrice Chisamanga, Xenia de la Ossa, Hannah de Lázari, Jason D. Lotay, Mario Garcia-Fernandez, Julian Kupka, Ruben Minasian, Raul Gonzalez Molina, David Duncan McNutt, Jock McOrist, Ilarion V. Melnikov, Sébastien Picard, Henrique Sá Earp, Christian Saemann, Savdeep Sethi, George R. Smith, Charles Strickland-Constable, Roberto Tellez-Dominguez, David Tennyson, Markus Upmeier, Fridrich Valach, Daniel Waldram, and Sander Winje for enlightening conversations surrounding the research presented here. ES would like to thank the mathematical research institute MATRIX in Australia, where part of the research reviewed here was performed, for a lively and rewarding research environment.
\end{acknowledgement}

\bigskip

\end{document}